\documentclass[conference]{IEEEtran}


\usepackage{cite}      

\usepackage[usenames,dvipsnames,svgnames,table]{xcolor}
\usepackage{graphicx}  

%
%

\usepackage{psfrag}    


\usepackage{url}       
\usepackage{tikz}


\usepackage{amsmath}   
\interdisplaylinepenalty=2500
\usepackage{amssymb}

\hyphenation{op-tical net-works semi-conduc-tor IEEEtran}

\newcommand{\shorten}[1]{}
\newtheorem{proposition}{Proposition}
\newtheorem{theorem}{Theorem}
\newtheorem{definition}{Definition}

\newtheorem{example}{Example}
\newtheorem{problem}{Problem}
\newcommand{\signed}%
    {{\unskip\nobreak\hfill\penalty50
      \hskip2em\hbox{}\nobreak\hfil $\blacksquare$
      \parfillskip=0pt \finalhyphendemerits=0 \par}}
\newenvironment{proof}[1]
    {
    \bf{Proof:}\rm{\noindent{#1 }}\ignorespaces
    }
    {\signed\addvspace\medskipamount}

\begin{document}

\title{Families of Optimal Binary Non-MDS Erasure Codes}

%
\author{\authorblockN{
Danilo Gligoroski and Katina Kralevska}
\authorblockA{Department of Telematics, Faculty of Information
Technology, Mathematics and Electrical Engineering, \\ Norwegian University of Science and Technology, Trondheim, Norway,\\ Email:
\{danilog, katinak\}@item.ntnu.no}

}


\maketitle

\begin{abstract}
We introduce a definition for \emph{Families of Optimal Binary Non-MDS Erasure Codes} for $[n, k]$ codes over $GF(2)$, and propose an algorithm for finding those families by using hill climbing techniques over Balanced XOR codes. Due to the hill climbing search, those families of codes have always better decoding probability than the codes generated in a typical Random Linear Network Coding scenario, i.e., random linear codes. We also show a surprising result that for small values of $k$, the decoding probability of our codes in $GF(2)$ is very close to the decoding probability of the codes obtained by Random Linear Network Coding but in the higher finite field $GF(4)$.
\end{abstract}


%
\IEEEpeerreviewmaketitle

\section{Introduction}

In the fast approaching Zettabyte Era \cite{CiscoVNI2013} the erasure codes will become the most important codes among all coding techniques. That is mostly due to two factors: 1. The global communications will be almost exclusively based on the packet switching paradigm, where the recovery from packet losses is addressed efficiently by erasure codes; 2. Storage systems will have capacities of hundreds of exabytes, and will have to tolerate and recover efficiently from multiple disk failures.

According to the rate of redundancy that is used, the erasure codes are divided in two classes: 1. Optimal or very close to optimal ones, known as Maximum Distance Separable (MDS) Codes \cite{bok:MW}, almost-MDS (AMDS) \cite{journals/dcc/Boer96} and near-MDS codes (NMDS) \cite{journals/jgeom/Dodunekov95}, and 2: Suboptimal or non-MDS codes \cite{Akhlaghi,conf/stoc/LubyMSSS97,conf/fast/Hafner05,journals/tos/HuangCL13,conf/mss/GreenanLW10}.

Reed-Solomon codes~\cite{reed60polynomial} are a well known class of MDS codes that provide a general technique for construction of MDS codes. However, these codes are defined in higher finite fields and they can be very computationally demanding. That is the main reason for series of research efforts to find codes that work just in the simplest finite field $GF(2)$ where the operations are bitwise exclusive-or (XOR) operations \cite{journals/tc/BlaumBBM95,conf/fast/CorbettEGGKLS04,journals/tc/HuangX08,kbp:12:rec}.

Beside the use in massive storage systems, the erasure codes have been recently used in one research area that is addressing the demanding needs for increasing the speed and reliability of packet based communications. That evolving area is Network Coding \cite{Ahlswede2000}. Network Coding allows nodes in the network to perform a set of functions over the generated or received data packets before forwarding them. Random Linear Network Coding (RLNC) \cite{Ho2006} is a network coding technique that produces random linear combinations of the packets over a Galois Field of size $q$, $GF(q)$. The field size has an impact on the decoding probability, i.e., the probability of receiving linearly independent packets increases with $q$.

When one or more sources want to transmit $k$ packets to one or more destination nodes, the channel conditions must be considered. Even in a presence of packet losses (erasures) the destination node has to be able to decode $k$ original packets by receiving $k+r$ packets. The authors in \cite{conf/globecom/LucaniMS09} derive the average number of received coded packets $n$ for successful decoding of $k$ original packets at the destination nodes. They study the effect of $q$ on the performance of RLNC. The exact probability that $k$ out of $k+r$ received packets are linearly independent is derived in \cite{oai:hal.inria.fr:inria-00613337}. Both papers show that $q$ equal to 4 or 8 is enough to get very close to the optimal performance even when $k$ is not very large.

However, as in the case of codes for massive storage systems, working in higher fields or with large number of data packets has an impact on the computational complexity leading to higher energy consumption \cite{conf/icc/HeidePFM11} and no real benefits. A recent result in \cite{pgm:13:sfg} shows that the speed of computation on modern CPUs with wide SIMD instructions is similar for operations in $GF(2)$ and in $GF(16)$. On the other hand, implementing RLNC in higher fields on devices that have power and memory constraints is a challenging problem. Some recent studies show that RLNC in constrained devices in $GF(2)$ is up to two orders of magnitude less energy demanding and up to one order of magnitude faster than RLNC in higher fields \cite{pedersen}, \cite{icc09}.

In this work we introduce a definition of \emph{Families of Optimal Binary Non-MDS Erasure Codes} for $[n, k]$ codes over $GF(2)$. Then we propose one heuristic algorithm for finding those families by using hill climbing techniques over Balanced XOR codes introduced in \cite{conf/ifip6-6/KralevskaGO13}. Due to the hill climbing search, those families of codes have always better decoding probability than the codes generated in a typical Random Linear Network Coding scenario, i.e., random linear codes as described in \cite{oai:hal.inria.fr:inria-00613337}. We also show a surprising result that for small values of $k$, the decoding probability of our codes in $GF(2)$ is very close to the decoding probability of the codes obtained by RLNC but in the higher finite field $GF(4)$.

The paper is organized as follows. In Section \ref{Preliminaries}, we introduce the basic terminology and the definition of Families of Optimal Binary Non-MDS Erasure Codes. In Section \ref{HillClimbing}, we describe one heuristic algorithm for finding those Families of Optimal Binary non-MDS Erasure Codes. We also discuss and compare the properties of our erasure codes to codes generated in a typical Random Linear Network Coding scenario, i.e., random linear codes. Conclusions and future work are summarized in Section \ref{Conclusions}.

\section{Mathematical Preliminaries} \label{Preliminaries}
In this section we briefly introduce the basic terminology, some useful properties and facts about linear codes, as well as some basic terminology and coding methods for Balanced XOR codes \cite{conf/ifip6-6/KralevskaGO13}.

Let us denote by $\mathbf{F}_q = GF (q)$ the Galois field with $q$ elements, and by $\mathbf{F}_q^n$ the $n$-dimensional vector space over $\mathbf{F}_q$. Let us also denote by $[n, k]_q$ the $q$-ary linear code of length $n$ and rank $k$ which is actually a linear subspace $C$ with dimension $k$ of the vector space $\mathbf{F}_q$. An $[n, k, d]_q$ code is an $[n, k]_q$ code with minimum weight at least $d$ among all nonzero codewords. An $[n, k, d]_q$ code is called maximum distance separable (MDS) if $d = n - k + 1$. The Singleton defect of an $[n, k, d]_q$ code $C$ defined as $s(C ) = n - k + 1 - d$ measures how far away is $C$ from being MDS.

Below we give some basic properties for MDS matrices that we use in this paper:
\begin{proposition}[\hspace{-0.025cm}\cite{bok:MW}, Ch. 11, Corollary 3]
Let $C$ be an $[n, k, d]$ code over $GF(q)$. The following statements are equivalent:
\begin{enumerate}
  \item $C$ is MDS;
  \item every $k$ columns of a generator matrix $G$ are linearly independent;
  \item every $n - k$ columns of a parity check matrix $H$ are linearly independent.
\end{enumerate}
\label{MDS_k_columns}
\end{proposition}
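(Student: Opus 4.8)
The plan is to prove the cycle of equivalences by first recalling the Singleton bound and the standard description of the minimum distance in terms of the parity check matrix, and then passing to the dual code (or arguing directly) to convert the statement about $H$ into one about $G$. First I would recall that every $[n,k,d]$ code satisfies the Singleton bound $d \le n-k+1$, so that $C$ being MDS means precisely that $d$ attains this maximum. Next I would use the elementary fact that $d$ equals the smallest cardinality of a set of linearly dependent columns of $H$; equivalently, $d \ge \delta$ holds if and only if every $\delta-1$ columns of $H$ are linearly independent. Applying this with $\delta = n-k+1$ and combining it with the Singleton bound yields $(1)\Leftrightarrow(3)$ at once.

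For $(1)\Leftrightarrow(2)$ I would argue directly at the level of $G$. Write a generic codeword as $c = xG$ with $x \in \mathbf{F}_q^k$, and for a coordinate set $S$ let $G_S$ be the $k\times|S|$ submatrix of the corresponding columns. If some $S$ with $|S|=k$ has linearly dependent columns, then $G_S$ is singular, so there is a nonzero $x$ with $xG_S = 0$; the codeword $xG$ then vanishes on all of $S$ and has weight at most $n-k$, so $C$ is not MDS. Conversely, if $C$ is not MDS there is a nonzero codeword $c = xG$ of weight at most $n-k$, hence at least $k$ zero coordinates, forming a set $T$; choosing any $S\subseteq T$ with $|S|=k$ gives $xG_S = 0$ with $x\neq 0$, so those $k$ columns of $G$ are dependent. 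This gives $(1)\Leftrightarrow(2)$. Alternatively, one can deduce $(2)$ from $(3)$ by applying the already-proved equivalence $(1)\Leftrightarrow(3)$ to the dual code $C^{\perp}$, whose generator and parity check matrices are $H$ and $G$, once one observes that $C$ is MDS if and only if $C^{\perp}$ is MDS.

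I expect the only real subtlety to be the bookkeeping between weights and zero coordinates: a codeword of weight at most $n-k$ has at least $k$ zeros, and restricting to a $k$-subset of those zeros yields a genuine singular $k\times k$ submatrix of $G$ (and dually for $H$). It is also worth being explicit that in each equivalence the ``easy'' direction is the one producing a low-weight codeword from a dependent set of columns, while the reverse direction relies on the Singleton bound to pin $d$ exactly at $n-k+1$. If the duality route is chosen for $(2)$, the one extra ingredient is the short lemma that MDS is preserved under $C\mapsto C^{\perp}$, which itself is immediate from $(1)\Leftrightarrow(3)$; I would either include that one-line lemma or simply retain the direct column argument above to keep the proof self-contained.
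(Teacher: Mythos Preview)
Your argument is correct and is essentially the classical textbook proof: use that $d$ equals the minimum size of a linearly dependent column set of $H$ together with the Singleton bound for $(1)\Leftrightarrow(3)$, and either a direct zero-coordinate/singular-submatrix argument or duality for $(1)\Leftrightarrow(2)$. Note, however, that the paper does not supply its own proof of this proposition at all; it is simply quoted as a known result from MacWilliams--Sloane (Ch.~11, Corollary~3), so there is nothing in the paper to compare your approach against beyond observing that your write-up matches the standard argument in that reference.
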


\begin{definition}
Let $C$ be an $[n, k]$ code over $GF(q)$ with a generator matrix $G$. Let us denote by $\mathcal{G}_{I}, I=k,\ldots,n$ the sets of submatrices obtained from $G$ when choosing $I$ columns from $G$, and by $\mathcal{D}_{I} \subset \mathcal{G}_{I}, I=k,\ldots,n$ the subsets of $\mathcal{G}_{I}$ with a rank $k$. We call the following vector $V_D=(\varrho_0, \varrho_1, \ldots, \varrho_{n-k})$, $\varrho_{i}=|\mathcal{D}_{i+k}|/|\mathcal{G}_{i+k}|$, the \emph{Vector of Exact Decoding Probability}, for the code $C$.
\label{DecodableSubspaces}
\end{definition}

With other words, the value $\varrho_{i}$ represents the probability that we can decode all $k$ original values $x_1, \ldots, x_k$, if we are given $k+i$ values $y_1, \ldots, y_{k+i}$ that corresponds to encoding with $k+i$ columns of the generator matrix $G$.

For random generator matrices $G$, the values of $V_D$ are calculated in \cite{oai:hal.inria.fr:inria-00613337} and we formulate them in the following Proposition:
\begin{proposition}
For a linear $[n, k]$ code over $GF(q)$ with a random generator matrix $G$ the elements of the vector $V_D=(\varrho_0, \varrho_1, \ldots, \varrho_{n-k})$ have the following values:
\begin{equation}
\varrho_{i} = P(k+i),
\label{rhos}
\end{equation}
where the values $P(I)$ are computed as follows:
\begin{equation}
P(I) = \left\{
\begin{array}{ll}
0 & \text{if } I < k,\\
\prod_{j=0}^{k-1}\Big( 1 - \frac{1}{q^{I-j}}  \Big) & \text{if } I \ge k.
\end{array} \right.
\end{equation}
\label{V-D-RLNC}
\end{proposition}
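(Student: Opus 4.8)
The plan is to reduce the claim to the classical count of full‑rank matrices over a finite field. First I would pin down the probabilistic model implicit in the phrase \emph{random generator matrix}: in the RLNC scenario the $k+i$ coded symbols held by the receiver are independent uniformly random linear combinations of $x_1,\dots,x_k$, so the $k\times(k+i)$ submatrix of $G$ indexed by those symbols is a uniformly random matrix over $GF(q)$, with its $k(k+i)$ entries i.i.d.\ uniform. Equivalently, drawing $G$ with i.i.d.\ uniform entries makes every fixed choice of $I=k+i$ columns a uniform $k\times I$ matrix; by linearity of expectation over the $|\mathcal{G}_{k+i}|$ column subsets, $\mathbb{E}\big[\,|\mathcal{D}_{k+i}|/|\mathcal{G}_{k+i}|\,\big]$ then equals the single probability $p_{k,I}:=\Pr[\operatorname{rank}(M)=k]$ for one uniform $k\times I$ matrix $M$. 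So it suffices to show $p_{k,I}=P(I)$.

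Next I would dispose of the degenerate branch: if $I<k$ then $\operatorname{rank}(M)\le\min(k,I)=I<k$, hence $p_{k,I}=0$, matching $P(I)$ and giving the (vacuous) requirement $\varrho_i=0$ when $k+i<k$. For $I\ge k$ I would count the full‑rank matrices row by row. Viewing the rows $r_1,\dots,r_k\in GF(q)^I$, they are linearly independent exactly when each $r_{j+1}\notin\operatorname{span}(r_1,\dots,r_j)$, a subspace of size $q^{j}$; hence the number of full‑rank $k\times I$ matrices is $\prod_{j=0}^{k-1}(q^{I}-q^{j})$, out of $q^{kI}$ matrices in total. Dividing and pulling a factor $q^{I}$ out of each term,
\begin{equation}
p_{k,I}=\prod_{j=0}^{k-1}\frac{q^{I}-q^{j}}{q^{I}}=\prod_{j=0}^{k-1}\Big(1-\frac{1}{q^{I-j}}\Big)=P(I),
\end{equation}
and substituting $I=k+i$ yields $\varrho_i=P(k+i)$, i.e.\ \eqref{rhos}.

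There is no serious obstacle here; the combinatorial core is just the standard enumeration of ordered bases (equivalently, of surjective linear maps $GF(q)^I\to GF(q)^k$). The only step that deserves care is the modeling one, since $\varrho_i$ in Definition~\ref{DecodableSubspaces} is a deterministic quantity attached to a fixed code, so one has to state in what sense it \emph{equals} $P(k+i)$: as the expectation over a uniformly random $G$, or, operationally, as the decoding probability when the received combinations are themselves freshly drawn at random. Once that interpretation is fixed, the computation above closes the proof. I would also note in passing that the same formula shows $\varrho_i$ is increasing in $i$ and tends to $1$ quickly as $q$ grows, which is precisely the baseline the paper's binary constructions aim to beat.
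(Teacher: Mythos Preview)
Your argument is correct: the row-by-row count of full-rank $k\times I$ matrices over $GF(q)$ gives exactly the product $\prod_{j=0}^{k-1}(1-q^{-(I-j)})$, and your handling of the degenerate case $I<k$ and of the modeling caveat (that $\varrho_i$ in Definition~\ref{DecodableSubspaces} is deterministic for a fixed code, so the equality must be read as an expectation over uniformly random $G$, or equivalently over freshly sampled coded symbols) is appropriate and worth stating explicitly. The paper, however, does not derive the formula at all: its proof simply observes that the expression is equation~(7) of Trullols-Cruces, Barcel\'o-Ordinas and Fiore with notation adapted to $[n,k]$ codes, and that \eqref{rhos} then follows by substituting $I=k+i$. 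So your route is genuinely different---you supply a self-contained combinatorial derivation where the paper defers to a citation. What you gain is independence from the external reference and a transparent explanation of \emph{why} the product has that shape; what the paper's approach buys is brevity, at the cost of requiring the reader to consult another source for the actual argument.
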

\begin{proof}
\ The equation (\ref{V-D-RLNC}) is actually the equation (7) in \cite{oai:hal.inria.fr:inria-00613337} with adopted notation to be consistent with the standard notation for linear $[n, k]$ codes over $GF(q)$. The equation (\ref{rhos}) then follows directly.
\end{proof}

The connection between the Vector of Exact Decoding Probability and the MDS codes can be established by using the Proposition \ref{V-D-RLNC} as follows:
\begin{theorem}
A linear $[n, k]$ code $C$ over $GF(q)$ with a generator matrix $G$ is a MDS code iff the Vector of Exact Decoding Probability is the following vector $V_D=(\varrho_0, \varrho_1, \ldots, \varrho_{n-k})=(1, 1, \ldots, 1)$.
\end{theorem}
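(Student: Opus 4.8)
The plan is to prove both directions of the equivalence by relating the combinatorial condition defining MDS codes (Proposition \ref{MDS_k_columns}, part 2) directly to the entries $\varrho_i$ of the Vector of Exact Decoding Probability, rather than going through the random‑matrix formula of Proposition \ref{V-D-RLNC}. The key observation is that $\varrho_i = |\mathcal{D}_{k+i}|/|\mathcal{G}_{k+i}| \le 1$ always, with equality if and only if \emph{every} choice of $k+i$ columns of $G$ yields a submatrix of rank $k$, i.e. $\mathcal{D}_{k+i} = \mathcal{G}_{k+i}$.

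First I would treat the forward direction. Assume $C$ is MDS. By Proposition \ref{MDS_k_columns}, every $k$ columns of $G$ are linearly independent, hence every such $k\times k$ submatrix has rank $k$, so $\varrho_0 = 1$. For $i \ge 1$, any selection of $k+i$ columns contains a subset of $k$ columns which already spans a $k$‑dimensional space; since the column space of $G$ has dimension $k$, the full $k+i$ columns also have rank $k$, so $\mathcal{D}_{k+i} = \mathcal{G}_{k+i}$ and $\varrho_i = 1$. Thus $V_D = (1,1,\ldots,1)$.

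For the converse, assume $V_D = (1,1,\ldots,1)$; in particular $\varrho_0 = 1$, meaning $\mathcal{D}_k = \mathcal{G}_k$, i.e. every $k\times k$ submatrix of $G$ has rank $k$, which is exactly condition 2 of Proposition \ref{MDS_k_columns}. Hence $C$ is MDS, and since $d = n-k+1$ for an MDS code the Singleton defect vanishes. (One should also note in passing that a generator matrix with all $k$‑column submatrices of full rank has rank $k$, so $C$ genuinely has dimension $k$; this is immediate since $n \ge k$.) Combining the two directions gives the stated iff.

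The main obstacle is essentially bookkeeping rather than deep mathematics: one must be careful that $\varrho_i = 1$ is equivalent to $\mathcal{D}_{k+i} = \mathcal{G}_{k+i}$ (not merely $\mathcal{D}_{k+i} \neq \emptyset$), and that the monotonicity argument in the forward direction — every $(k+i)$‑column submatrix contains a full‑rank $k$‑column submatrix once $C$ is MDS — is correctly justified by the fact that $\mathrm{rank}(G) = k$ caps the rank of any submatrix from above. Alternatively, one can give a shorter proof invoking Proposition \ref{V-D-RLNC} only for the special structural fact it encodes, but the direct combinatorial argument above is cleaner and avoids conflating "random generator matrix" with "arbitrary generator matrix."
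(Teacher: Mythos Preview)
Your proof is correct and complete. The paper's own proof is a single sentence that invokes Proposition~\ref{V-D-RLNC} (the random-matrix decoding formula) together with Definition~\ref{DecodableSubspaces}, whereas you instead appeal to Proposition~\ref{MDS_k_columns} (the combinatorial characterization of MDS codes via $k$-column independence) and Definition~\ref{DecodableSubspaces}. Your route is the more appropriate one: Proposition~\ref{V-D-RLNC} computes $V_D$ for a \emph{random} generator matrix, not for an arbitrary fixed code, so citing it to characterize when a specific code is MDS is at best a heuristic pointer rather than a proof. By contrast, your argument makes the equivalence transparent: $\varrho_0=1$ is literally the statement that every $k$ columns of $G$ are independent, which is Proposition~\ref{MDS_k_columns}(2); the forward implication for $\varrho_i$ with $i\ge 1$ then follows from the rank-monotonicity observation you spell out. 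You also correctly flag the one subtlety (that $\varrho_i=1$ means $\mathcal{D}_{k+i}=\mathcal{G}_{k+i}$, not merely $\mathcal{D}_{k+i}\neq\emptyset$), which the paper's one-liner leaves implicit.
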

\begin{proof}
\ The theorem can be proved with a direct application of the Proposition \ref{V-D-RLNC} and the Definition \ref{DecodableSubspaces}.
\end{proof}

In this work we are interested exclusively to work with XOR coding, i.e., to work with linear binary codes. Thus, our interest is to define a class of binary codes that in some properties are as close as possible to MDS codes. Unfortunately, it is a well known old fact in coding theory (see for example \cite{bok:MW}) that for the case of linear binary codes, all MDS codes are trivial, i.e., $k=1$ or $n=k+1$ or $n=k$.

So, dealing with the fact that non-trivial binary codes are not MDS, we adopt a strategy to search for codes that will be optimal from certain perspective according to the Vector of Exact Decoding Probability $V_D$. When a channel has an erasure probability $p$ the strategy will be to find binary codes that maximize the probability to recover the original data. Therefore, we prove the following Theorem:
\begin{theorem}
Let $C$ be a binary linear $[n, k]$ code with a Vector of Exact Decoding Probability $V_D=(\varrho_0, \varrho_1, \ldots, \varrho_{n-k})$ and let $k$ packets are encoded by $C$. The probability $p_s$ of successful decoding of $k$ packets from $n$ encoded and transmitted packets via a channel with an erasure probability $p$ is:
\begin{equation}
\footnotesize
p_s = 1 - \Bigg( \sum_{i=0}^{n-k} \binom{n}{i} p^{i}(1-p)^{n-i} (1 - \varrho_{n-k-i}) + \sum_{i=n-k+1}^{n} \binom{n}{i} p^{i}(1-p)^{n-i} \Bigg)
\label{SuccessfulDecodingProbability}
\end{equation}
\end{theorem}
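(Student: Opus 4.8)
The plan is to condition on the number of erased packets and use the definition of the Vector of Exact Decoding Probability directly. Let $i$ denote the number of erased packets among the $n$ transmitted ones; since erasures occur independently with probability $p$, the event ``exactly $i$ packets are erased'' has probability $\binom{n}{i}p^{i}(1-p)^{n-i}$, and conditioned on this event the receiver holds exactly $n-i$ coded packets corresponding to $n-i$ uniformly chosen columns of the generator matrix $G$.

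First I would split the range of $i$ into two regimes. If $i \ge n-k+1$, then the receiver has strictly fewer than $k$ packets, so decoding is impossible and the conditional failure probability is $1$; this accounts for the second sum in (\ref{SuccessfulDecodingProbability}). If $0 \le i \le n-k$, then the receiver has $n-i = k + (n-k-i)$ packets, and by Definition \ref{DecodableSubspaces} the conditional probability of successful decoding over the uniformly random choice of columns is exactly $\varrho_{n-k-i}$ (set the index $\ell = n-k-i$, so $k+\ell = n-i$). Hence the conditional failure probability in this regime is $1-\varrho_{n-k-i}$, which produces the first sum.

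Next I would assemble the total failure probability by the law of total probability: $1 - p_s = \sum_{i=0}^{n-k}\binom{n}{i}p^{i}(1-p)^{n-i}(1-\varrho_{n-k-i}) + \sum_{i=n-k+1}^{n}\binom{n}{i}p^{i}(1-p)^{n-i}$, and then rearrange to isolate $p_s$, obtaining exactly (\ref{SuccessfulDecodingProbability}).

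The only subtle point — and the one I would be most careful about — is the justification that, conditioned on exactly $i$ erasures, the surviving columns are a \emph{uniformly random} size-$(n-i)$ subset of the columns of $G$, so that the conditional decoding probability coincides with $\varrho_{n-k-i}=|\mathcal{D}_{n-i}|/|\mathcal{G}_{n-i}|$ rather than some weighted average; this follows because the erasure channel acts on each packet independently and symmetrically, so all $\binom{n}{i}$ erasure patterns of a given size are equally likely. Everything else is a routine binomial bookkeeping argument.
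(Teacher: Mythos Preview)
Your proof is correct and follows essentially the same approach as the paper: condition on the number $i$ of erased packets, split into the regimes $0\le i\le n-k$ and $i\ge n-k+1$, identify the conditional failure probabilities as $1-\varrho_{n-k-i}$ and $1$ respectively, and sum via the law of total probability to obtain $1-p_s$. Your explicit justification that the surviving $n-i$ columns form a uniformly random subset (so that Definition~\ref{DecodableSubspaces} applies) is a point the paper's proof leaves implicit.
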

\begin{proof}
\ Let us denote by $E_1$ the event that $i$ packets, where $0\leq i\leq n-k$, are lost during the transmission, and by $E_2$ the event that more than $n-k$ packets from the set of all $n$ packets are lost during the transmission.

The probability of the event $E_1$ is calculated by the expression:
\begin{equation}
P(E_1) = \sum_{i=0}^{n-k} \binom{n}{i} p^{i}(1-p)^{n-i},
\label{failure1}
\end{equation}
and the probability of the event $E_2$ is:
\begin{equation}
P(E_2) = \sum_{i=n-k+1}^{n} \binom{n}{i} p^{i}(1-p)^{n-i}.
\label{failure2}
\end{equation}

From expression (\ref{failure1}) we compute the probability $p_{u_1}$ of failure to decode $k$ original packets, by multiplying every value in the sum by the opposite probability of successful decoding when $n-k-i$ columns of the generator matrix $G$ are received, i.e., when $i$ packets are lost. So the decoding failure probability if $i$ packets are lost ($0\leq i\leq n-k$) is computed by the following expression:
\begin{equation}
p_{u_1} = \sum_{i=0}^{n-k} \binom{n}{i} p^{i}(1-p)^{n-i}(1 - \varrho_{n-k-i}).
\label{failure3}
\end{equation}

If more than $n-k$ packets are lost then the probability to fail the decoding is 100\% thus the probability $p_{u_2}$ of failure to decode $k$ original packets is equal to $P(E_2)$, i.e., $p_{u_2}=P(E_2)$.

In total, the probability of unsuccessful decoding $p_u$ is:
\begin{eqnarray}
p_u&=& p_{u_1} + p_{u_2} = \\  \nonumber
&=&\sum_{i=0}^{n-k} \binom{n}{i} p^{i}(1-p)^{n-i} (1 - \varrho_{n-k-i}) + \\ \nonumber
& & + \sum_{i=n-k+1}^{n} \binom{n}{i} p^{i}(1-p)^{n-i} \\  \nonumber
\label{failure4}
\end{eqnarray}

Finally the probability $p_s$ of successful decoding of $k$ packets is the opposite probability of $p_u$ i.e.,
$$p_s = 1 - p_u .$$
\end{proof}

Having defined the probability $p_s$ of successful decoding of $k$ packets that are encoded with an $[n, k]$ binary code, we define a \emph{Family of Optimal Binary Non-MDS Erasure Codes} as follows:
\begin{definition}
Let $\mathcal{C}$ be a family of binary linear $[n, k]$ codes that have a probability $p_s$ of successful decoding $k$ packets from $n$ encoded and transmitted packets via a channel with an erasure probability $p$. We say that $\mathcal{C}$ is a \emph{Family of Optimal Binary Non-MDS Erasure Codes} if for every binary linear $[n, k]$ code $C'$ with a probability $p'_s$ of successful decoding of $k$ packets in a channel with an erasure probability $p$, there exist a code $C \in \mathcal{C}$ with a probability $p_s$ of successful decoding, such that $p'_s \le p_s$, for every erasure probability $p$.
\end{definition}
\begin{problem}
For given values of $n$ and $k$ find a Family $\mathcal{C}$ of Optimal Binary Non-MDS Erasure Codes.
\label{problem1}
\end{problem}

\section{A Hill Climbing Heuristics For Finding Families of Optimal Binary Non-MDS Erasure Codes} \label{HillClimbing}

Finding exact analytical solution (or finding deterministic and efficient algorithm that will find the solution) for the Problem \ref{problem1} is hard and in this moment we do not know such a solution. However, there are many heuristic optimization methodologies that can be used for a search of approximate solutions. We choose to use the simplest one: The Stochastic Hill-Climbing Methodology\cite{book:russell:2003}. The hill climbing heuristics has been already used in optimizing problems for RLNC such as in \cite{6012235}. In general, the stochastic  heuristics is defined as in Algorithm 1.
\begin{table}
\centering
\caption{A general Stochastic Hill-Climbing algorithm for finding a Family of Optimal Binary Non-MDS Erasure Codes for given values of $n$ and $k$}\label{GeneralStochasticHillClimbing}
\begin{tabular}{|c|}
  \hline
   \parbox{6.0cm}{\center \textbf{Algorithm 1} \vspace{0.1cm}}\\ %
  \hline
  \parbox{6.0cm}{\vspace{0.1cm} {\bf Input.} $n$ and $k$ \vspace{0.1cm}}\\
  \hline
  \parbox{6.0cm}{\vspace{0.1cm} {\bf Output.} A candidate Family $\mathcal{C}$ of Optimal Binary Non-MDS Erasure Codes \vspace{0.1cm}}\\
  \hline
  \begin{tabular}{l}
  \parbox{6.0cm}{\vspace{0.1cm} 1. Find a random $[n, k]$ linear binary code and compute its Vector of Exact Decoding Probability $V_D=(\varrho_0, \varrho_1, \ldots, \varrho_{n-k})$ and its   probability $p_s$ of successful decoding of $k$ packets from the equation (\ref{SuccessfulDecodingProbability}). }\\
  \parbox{6.0cm}{\vspace{0.1cm} 2. Repeatedly improve the solution until no more improvements are necessary/possible.}\\
 \end{tabular}\\
 \hline
  \end{tabular}
\vspace{-0.5cm}
\end{table}

In order to improve the codes found by Algorithm 1 we decided to work with balanced structures as they were introduced in \cite{conf/ifip6-6/KralevskaGO13}.

\begin{definition}
A XOR-ed coding is a coding that is realized exclusively by bitwise XOR operations between packets with equal length. Hence, it is a parallel bitwise linear transformation of $k$ source bits $x=(x_1,\ldots,x_k)$ by a $k \times k$ nonsingular binary matrix $\mathbf{K}$, i.e., $y = x \cdot \mathbf{K} $.
\end{definition}

In other words XOR-ed coding assumes work within the smallest finite field $GF(2)$, i.e., with $k \times k$ nonsingular binary matrices $\mathbf{K}$. While the binary matrices $\mathbf{K}$ in general can be of any form, the specifics about matrices introduced in \cite{conf/ifip6-6/KralevskaGO13} are that they are highly structured, balanced and their construction is based on Latin rectangles of dimensions $k_1 \times k$.

\begin{definition}
A Latin square of order $k$ with entries from an $k$-set X is an $k\times k$
array $L$ in which every cell contains an element of X such that every row of $L$ is a
permutation of X and every column of $L$ is a permutation of X.
\label{Latinsquare}
\end{definition}

\begin{definition}
A $k_1\times k$ Latin rectangle is a $k_1\times k$ array (where $k_1 \leq k$) in which each cell contains
a single symbol from an $k$-set X, such that each symbol occurs exactly once in each
row and at most once in each column.
\label{Latinrectangle}
\end{definition}

\begin{definition}
Let $(X,A)$ be a design where $X = \{x_1,\ldots, x_v\}$ and $A = \{A_1,\ldots,A_b\}$. The incidence matrix of $(X,A)$ is the $v\times b$ 0-−1 matrix $M = (m_{i,j})$ defined by the rule
$
m_{i,j} =
\begin{cases}
1, & \text{if}\  x_i\in A_j, \nonumber\\
0, & \text{if}\  x_i\notin A_j. \nonumber\\
\end{cases}
$
\label{incidencematrix}
\end{definition}

\begin{proposition}[\cite{conf/ifip6-6/KralevskaGO13}]
The incidence matrix $M = (m_{i,j})$ of any Latin rectangle with dimensions $k_1\times k$ is a balanced matrix with $k_1$ ones in each row and each column.
\label{balanced}
\end{proposition}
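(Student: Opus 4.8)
The plan is to unwind the definitions and then carry out a short double-counting argument. First I would make explicit which design $(X,A)$ is associated to a $k_1\times k$ Latin rectangle $L$: the ground set is the symbol set $X=\{x_1,\ldots,x_k\}$, and the blocks $A_1,\ldots,A_k$ are indexed by the \emph{columns} of $L$, with $A_j$ the set of symbols appearing in column $j$. (This is the only reading that makes the statement meaningful: if one instead took the rows as blocks, each block would be all of $X$ since every row is a permutation, and the incidence matrix would be all ones rather than balanced with $k_1$ ones.) With this identification, $M$ is the $k\times k$ incidence matrix with $m_{i,j}=1$ iff $x_i$ occurs in column $j$ of $L$.

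Next I would count the ones column by column. Fix a column $j$ of $L$. It has $k_1$ cells, and by the Latin-rectangle axiom each symbol occurs at most once in that column, so these $k_1$ cells carry $k_1$ pairwise distinct symbols; hence $|A_j|=k_1$, and the $j$-th column of $M$ has exactly $k_1$ ones.

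Then I would count the ones row by row. Fix a symbol $x_i$. By the Latin-rectangle axiom $x_i$ occurs exactly once in each of the $k_1$ rows of $L$, so $x_i$ occupies exactly $k_1$ cells of $L$; and since $x_i$ occurs at most once per column, those $k_1$ cells lie in $k_1$ pairwise distinct columns. Therefore $x_i\in A_j$ for exactly $k_1$ indices $j$, i.e.\ the $i$-th row of $M$ has exactly $k_1$ ones. Combining the two counts shows $M$ has constant row sum and constant column sum equal to $k_1$, which is precisely the balancedness claimed.

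There is essentially no hard step here; the argument is elementary counting. The only points requiring care are pinning down the correct associated design (columns as blocks, giving a square $k\times k$ incidence matrix), and invoking \emph{both} halves of the Latin-rectangle axiom in the row count: "exactly once per row" gives that each symbol has $k_1$ occurrences, while "at most once per column" is what forces those occurrences into $k_1$ distinct columns rather than fewer, so that they contribute $k_1$ ones to the corresponding row of $M$.
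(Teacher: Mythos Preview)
Your argument is correct: once the design is identified as ``symbols versus columns of $L$'', the Latin-rectangle axioms immediately give $|A_j|=k_1$ for each column count, and the combination of ``exactly once per row'' with ``at most once per column'' forces each symbol into exactly $k_1$ distinct columns for the row count. This is the natural proof.

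There is nothing to compare against, however: the paper does not prove this proposition but merely imports it from \cite{conf/ifip6-6/KralevskaGO13}. One small remark on conventions: if you match your construction against the worked Example in the paper, you will find that the displayed $M$ is the \emph{transpose} of the matrix your description produces (the paper's rows are indexed by columns of $L$ and its columns by symbols, rather than the other way round). This does not affect your proof in the slightest, since the claim is that both row sums and column sums equal $k_1$, a property invariant under transposition; but it is worth being aware of if you later need $M$ to line up with the paper's explicit matrices.
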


\begin{proposition}[\cite{conf/ifip6-6/KralevskaGO13}]
The necessary condition an incidence matrix $M = (m_{i,j})$ of a $k_1\times k$ Latin rectangle to be nonsingular in $GF(2)$ is $k_1$ to be odd, i.e., $k_1=2 l + 1$.
\label{nonsingular}
\end{proposition}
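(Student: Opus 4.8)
The plan is to prove the contrapositive: if $k_1$ is even, then $M$ is singular over $GF(2)$. The key observation is a parity argument on the column sums, which are controlled by Proposition~\ref{balanced}.

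First I would invoke Proposition~\ref{balanced}, which tells us that the incidence matrix $M=(m_{i,j})$ of a $k_1\times k$ Latin rectangle is a $k\times k$ balanced matrix with exactly $k_1$ ones in every column. Next, consider the all-ones row vector $\mathbf{1}=(1,1,\ldots,1)\in GF(2)^{k}$ and form the product $\mathbf{1}\cdot M$ over $GF(2)$. Its $j$-th coordinate is $\sum_{i=1}^{k} m_{i,j}$, i.e.\ the number of ones in column $j$ of $M$, which equals $k_1$. Hence, if $k_1$ is even, every coordinate of $\mathbf{1}\cdot M$ is $0$ in $GF(2)$, so $\mathbf{1}\cdot M=\mathbf{0}$. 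Since $\mathbf{1}\neq\mathbf{0}$, the rows of $M$ are linearly dependent over $GF(2)$, so $\det M=0$ in $GF(2)$ and $M$ is singular. Taking the contrapositive, nonsingularity of $M$ over $GF(2)$ forces $k_1$ to be odd, i.e.\ $k_1=2l+1$, which is exactly the claimed necessary condition. (The same conclusion follows symmetrically from the row sums and the right kernel, using that every row also has $k_1$ ones.)

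I do not expect a genuine obstacle here: the statement is a one-step parity argument, and the only point that deserves a word of care is the (standard) passage from ``$M$ has a nontrivial left-kernel vector'' to ``$M$ is singular,'' which is immediate for a square matrix. It is worth emphasizing in the write-up that the condition is only \emph{necessary}: oddness of $k_1$ does not by itself guarantee that $M$ is invertible over $GF(2)$, which is precisely why the subsequent heuristic search over such balanced structures is needed.
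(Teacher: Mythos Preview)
Your argument is correct: invoking Proposition~\ref{balanced} to get $k_1$ ones per column, then observing that the all-ones left vector annihilates $M$ over $GF(2)$ whenever $k_1$ is even, is exactly the right parity argument, and your remark that oddness is only necessary (not sufficient) is well placed. Note, however, that the present paper does not actually supply a proof of this proposition --- it is quoted with a citation to \cite{conf/ifip6-6/KralevskaGO13} and left unproved here --- so there is no in-paper argument to compare against; your write-up would serve perfectly well as the missing justification.
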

\begin{example}
\label{Example01}
Let us take the following Latin square and split it into two Latin rectangles:
$$\small
L =
\begin{bmatrix}
 1 & 4 & 3 & 5 & 2 \\
 3 & 1 & 5 & 2 & 4 \\
 4 & 2 & 1 & 3 & 5 \\
 \hline
 5 & 3 & 2 & 4 & 1 \\
 2 & 5 & 4 & 1 & 3 \\
\end{bmatrix}.
$$
The incidence matrix $M$ of the $3\times 5$ upper Latin rectangle is:
$$\small
M =
\begin{bmatrix}
 1 & 0 & 1 & 1 & 0 \\
 1 & 1 & 0 & 1 & 0 \\
 1 & 0 & 1 & 0 & 1 \\
 0 & 1 & 1 & 0 & 1 \\
 0 & 1 & 0 & 1 & 1 \\
\end{bmatrix}.
$$

Note how balanced are the rows and columns: in every row and every column, the number of 1s is 3.
\end{example}

The following proposition follows directly from the Proposition \ref{nonsingular}:
\begin{proposition}
The $k+1$-th column of the generator matrix $G$ of a trivial $[k+1, k]_2$ MDS code that has in the first $k$ columns a matrix for a balanced XOR-ed coding consists of all 1s.
\label{trivialXORMDS}
\end{proposition}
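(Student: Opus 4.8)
The plan is to first pin down the parity-check matrix of the code and then read off the last column of $G$ from a weight-parity argument. Write $G=[\mathbf{K}\mid\mathbf{c}]$, where $\mathbf{K}$ is the $k\times k$ balanced XOR-ed coding matrix occupying the first $k$ columns and $\mathbf{c}=(c_1,\dots,c_k)^{T}$ is the unknown $(k+1)$-th column. Since $G$ generates a (necessarily trivial) binary MDS code, its parameters are $[k+1,k,2]$, so a parity-check matrix $H$ has size $1\times(k+1)$. By part~3 of Proposition~\ref{MDS_k_columns}, every single column of $H$ is linearly independent, i.e.\ nonzero; over $GF(2)$ the only nonzero scalar is $1$, hence $H=(1,1,\dots,1)$.

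Next I would use the orthogonality $G H^{T}=\mathbf{0}$: every row of $G$ is orthogonal to the all-ones vector, so every row of $G$ has even Hamming weight. By Proposition~\ref{balanced} each row of $\mathbf{K}$ contains exactly $k_1$ ones, and because $\mathbf{K}$ must be nonsingular over $GF(2)$ for the code to have rank $k$, Proposition~\ref{nonsingular} forces $k_1$ to be odd. The weight of row $i$ of $G$ is therefore $k_1+c_i$; being even, and $k_1$ being odd, this requires $c_i\equiv 1\pmod 2$, that is $c_i=1$, for every $i=1,\dots,k$. Hence $\mathbf{c}=(1,1,\dots,1)^{T}$, which is exactly the claim.

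An equivalent route, avoiding explicit reference to $H$, is to transform $G$ into systematic form $\mathbf{K}^{-1}G=[\,I_k\mid\mathbf{K}^{-1}\mathbf{c}\,]$ (possible since $\mathbf{K}$ is nonsingular) and to note that a systematic binary $[k+1,k]$ code has minimum distance at least $2$ iff the last column has no zero entry, i.e.\ equals $(1,\dots,1)^{T}$; then $\mathbf{c}=\mathbf{K}\,(1,\dots,1)^{T}$ is the vector of row sums of $\mathbf{K}$, which equals $(k_1,\dots,k_1)^{T}\equiv(1,\dots,1)^{T}\pmod 2$ because $k_1$ is odd. Either way, the only point requiring care is the identification of the parity-check structure of a $[k+1,k,2]$ binary code (equivalently, that any such code is the even-weight/parity-check code); once that is in place, the statement is an immediate consequence of the balance of $\mathbf{K}$ together with the oddness of $k_1$ established in Proposition~\ref{nonsingular}, so I do not anticipate any genuine obstacle.
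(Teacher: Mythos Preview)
Your argument is correct and is precisely the reasoning the paper leaves implicit: the paper offers no proof beyond the sentence ``follows directly from Proposition~\ref{nonsingular},'' and your write-up supplies exactly the missing steps (identify $H=(1,\dots,1)$ via Proposition~\ref{MDS_k_columns}, use $GH^{T}=0$ so each row of $G$ has even weight, then invoke Propositions~\ref{balanced} and~\ref{nonsingular} to force $c_i=1$). There is nothing to add or to correct.
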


We now describe the modified Stochastic Hill-Climbing that is using Balanced XOR codes where one column of the generator matrix is defined as in Proposition \ref{trivialXORMDS}:
\begin{table}[!h]
\centering
\caption{A Stochastic Hill-Climbing algorithm for finding a Family of Optimal Binary Non-MDS Erasure Codes based on Balanced XOR codes}\label{BNCStochasticHillClimbing}
\begin{tabular}{|c|}
  \hline
   \parbox{6.0cm}{\center \textbf{Algorithm 2} \vspace{0.1cm}}\\ %
  \hline
  \parbox{6.0cm}{\vspace{0.1cm} {\bf Input.} $n$ and $k$ \vspace{0.1cm}}\\
  \hline
  \parbox{6.0cm}{\vspace{0.1cm} {\bf Output.} A candidate Family $\mathcal{C}$ of Optimal Binary Non-MDS Erasure Codes \vspace{0.1cm}}\\
  \hline
  \begin{tabular}{l}
  \parbox{6.0cm}{\vspace{0.1cm} 1. Find a random Balanced XOR code and put it as the first part of the generator matrix $G$ of an $[n, k]$ code. Set the $k+1$-th column to consists of all 1s, and set the remaining columns with random values. Compute the Vector of Exact Decoding Probability $V_D=(\varrho_0, \varrho_1, \ldots, \varrho_{n-k})$ and its probability $p_s$ of successful decoding of $k$ packets from the equation (\ref{SuccessfulDecodingProbability}). }\\
  \parbox{6.0cm}{\vspace{0.1cm} 2. Repeatedly improve the solution until no more improvements are necessary/possible.}\\
 \end{tabular}\\
 \hline
  \end{tabular}
\end{table}

We would like to note that Algorithm 1 can find codes with similar decoding probabilities as Algorithm 2, but after performing more stochastic search attempts. Moreover, the codes that Algorithm 2 finds have advantages that they are structured, balanced and they are sparse, where the sparsity can go down to just 3 nonzero positions. 

We now give two numerical results that compare the performance of our codes to a typical linear random code in $GF(2)$ that can be generated in RLNC. The same parameters are taken as in \cite{oai:hal.inria.fr:inria-00613337}, i.e., $r={0, \ldots , 8}$ is the number of excess packets for $k=5$ and $k=100$. The results show that the decoding probability with our scheme is closer to the decoding probability under RLNC in $GF(4)$ when $k$ is small. We would like to emphasize that with Algorithm 2 we could easily find codes with $k$ in range $[5, \ldots, 1000]$.
\begin{figure}
\begin{minipage}[b]{0.5\linewidth}
\centering
\includegraphics[width=3.4in]{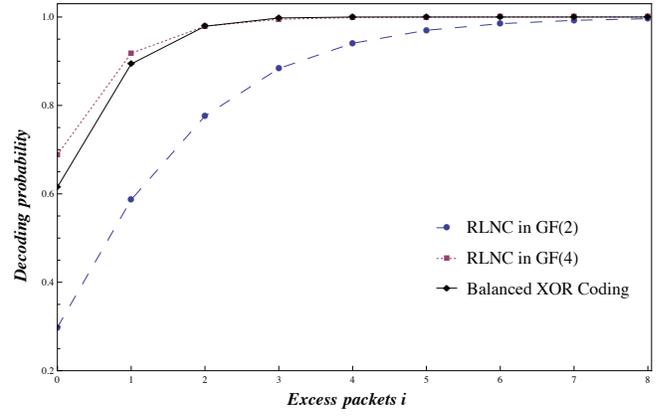}
\end{minipage}
\caption{Vector of Exact Decoding Probability $V_D$ for $k$=5}
\label{k_5}
\end{figure}

In Figure \ref{k_5} the code that was found after 10,000 stochastic attempts by the Balanced XOR-ed approach of Algorithm 2 is based on the Latin Square from Example \ref{Example01}. Its generator matrix is the following:
$$\small
G =
\left[
\begin{array}{ccccccccccccc}
 1 & 1 & 1 & 0 & 0 & 1 & 0 & 0 & 0 & 0 & 1 & 0 & 1 \\
 0 & 1 & 0 & 1 & 1 & 1 & 0 & 0 & 1 & 0 & 1 & 1 & 0 \\
 1 & 0 & 1 & 1 & 0 & 1 & 0 & 1 & 0 & 0 & 0 & 1 & 0 \\
 1 & 1 & 0 & 0 & 1 & 1 & 0 & 0 & 0 & 1 & 0 & 1 & 0 \\
 0 & 0 & 1 & 1 & 1 & 1 & 1 & 0 & 0 & 0 & 1 & 0 & 0 \\
\end{array}
\right]
$$
The Vector of Exact Decoding Probability for this code is: $V_D=(0.615, 0.895, 0.979, 0.998, 1., 1., 1., 1.)$ and is presented in Figure \ref{k_5} with a solid line.

A typical random linear code in $GF(2)$ generated in RLNC is presented in Figure \ref{k_5} with a dashed line. For comparison purposes, we put the values for decoding probabilities of a typical random linear code in $GF(4)$ in the same Figure \ref{k_5}. As it can be seen, our codes in $GF(2)$ have decoding probabilities as a random linear code in $GF(4)$.

The real advantage of our codes is seen in Figure \ref{comparison_13_5} in channels where packet losses occur with certain probabilities. Similarly as in \cite{oai:hal.inria.fr:inria-00613337} we give the results for $[n, k]=[108,100]$ in Figure \ref{k_100} and in Figure \ref{comparison_108_100}.
\begin{figure}
\begin{minipage}[b]{0.5\linewidth}
\centering
\includegraphics[width=3.4in]{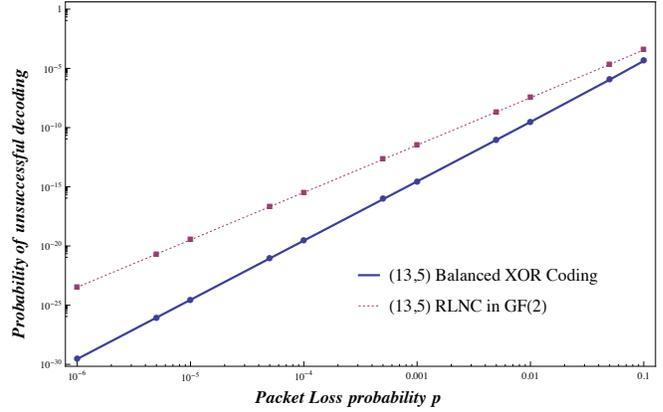}
\end{minipage}
\caption{Comparison between probabilities of unsuccessful decoding of a typical RLNC code and a code obtained with our stochastic strategy in $GF(2)$ for $k=5$}
\label{comparison_13_5}
\end{figure}

\begin{figure}
\begin{minipage}[b]{0.5\linewidth}
\centering
\includegraphics[width=3.4in]{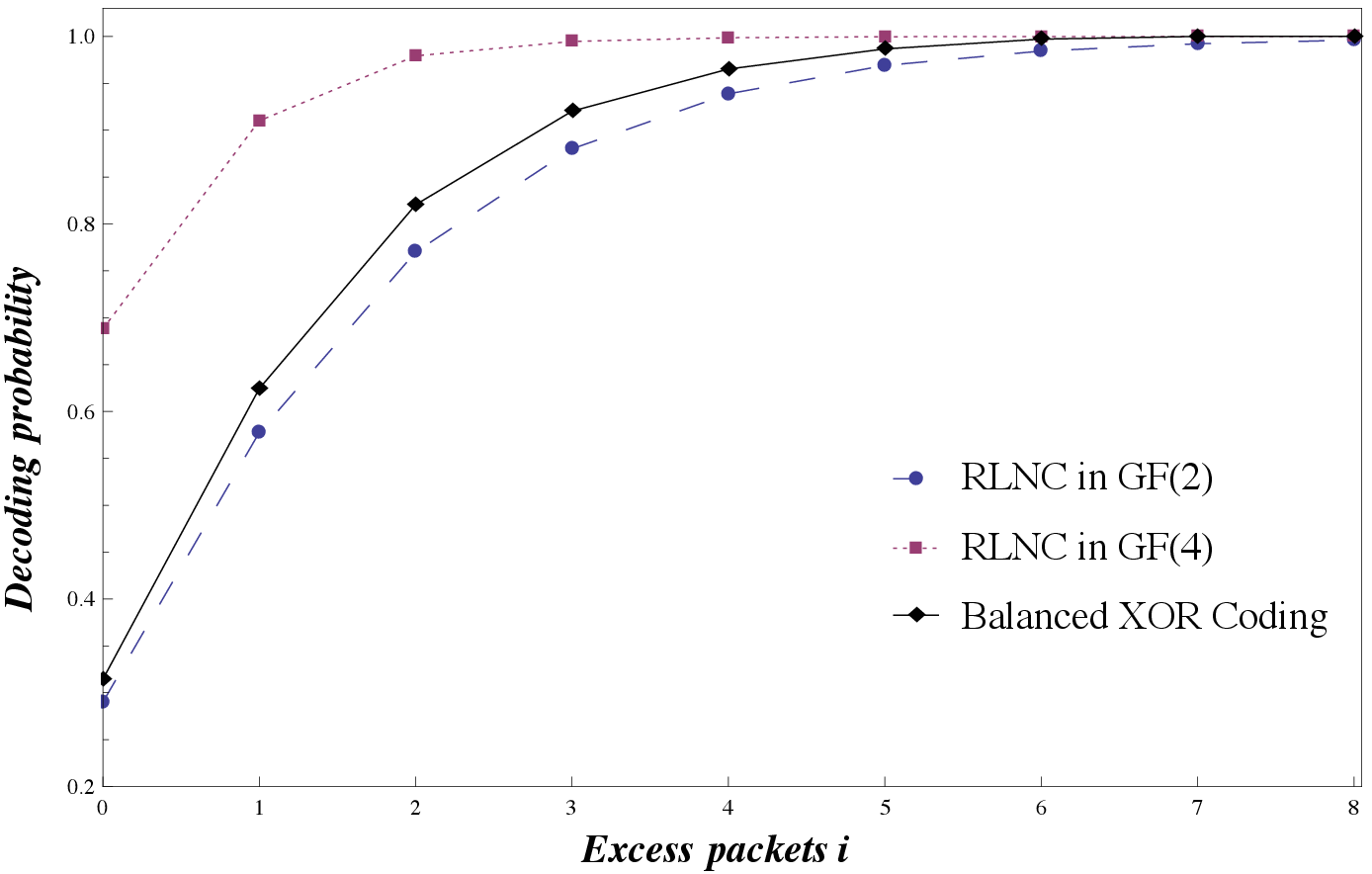}
\end{minipage}
\caption{Vector of Exact Decoding Probability $V_D$ for $k$=100}
\label{k_100}
\end{figure}
\begin{figure}
\begin{minipage}[b]{0.5\linewidth}
\centering
\includegraphics[width=3.4in]{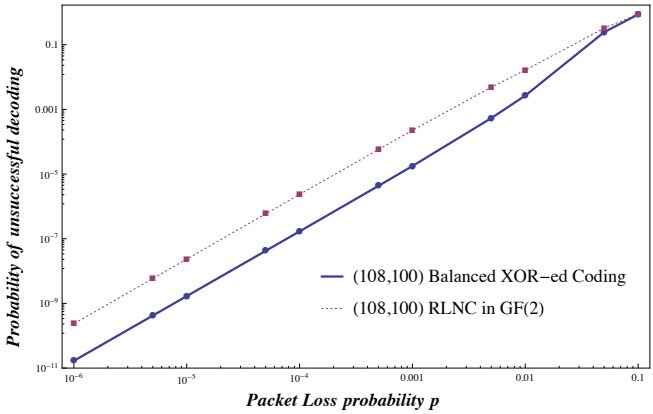}
\end{minipage}
\caption{Comparison between probabilities of unsuccessful decoding of a typical RLNC code and a code obtained with our stochastic strategy in $GF(2)$ for $k=100$}
\label{comparison_108_100}
\end{figure}

\section{Conclusions}\label{Conclusions}
We introduced a definition of \emph{Families of Optimal Binary Non-MDS Erasure Codes} for $[n, k]$ codes over $GF(2)$ and proposed one heuristic algorithm for finding those families using hill climbing techniques over Balanced XOR codes. We showed that the families of codes that we found have always better decoding probability than the decoding probability of random linear codes generated in RLNC. We also showed that for small values of $k$ the decoding probability of our codes in $GF(2)$ is very close to the decoding probability of the random linear codes in $GF(4)$.

As a next research direction, we point out that it will be very useful to further investigate the theoretical lower and upper bounds of decoding probabilities of the defined Families of Optimal Binary Non-MDS Erasure Codes and to find better heuristic or deterministic algorithms for efficient finding of those families. It would be a natural research directions to see how this methodology performs in higher fields.

\section*{Acknowledgements}\label{Ack}
We would like to thank Harald {\O}verby and Rune E. Jensen for their discussions that improved the quality of this paper. We would also like to thank the anonymous reviewers for their useful comments and suggestions.

\bibliographystyle{plain}

\begin{thebibliography}{10}
	
	\bibitem{Ahlswede2000}
	R.~Ahlswede, N.~Cai, S.~Y.~R. Li, and R.~W. Yeung.
	\newblock Network information flow.
	\newblock {\em IEEE Transactions on Information Theory}, 46(4):1204--1216,
	2000.
	
	\bibitem{journals/tc/BlaumBBM95}
	Mario Blaum, Jim Brady, Jehoshua Bruck, and Jai Menon.
	\newblock Evenodd: An efficient scheme for tolerating double disk failures in
	raid architectures.
	\newblock {\em IEEE Trans. Computers}, 44(2):192--202, 1995.
	
	\bibitem{CiscoVNI2013}
	Cisco.
	\newblock Cisco visual networking index: Forecast and methodology, 2012–2017.
	\newblock {\em White Paper}, May 2013.
	
	\bibitem{conf/fast/CorbettEGGKLS04}
	Peter~F. Corbett, Robert English, Atul Goel, Tomislav Grcanac, Steven Kleiman,
	James Leong, and Sunitha Sankar.
	\newblock Row-diagonal parity for double disk failure correction.
	\newblock In {\em FAST}, pages 1--14. USENIX, 2004.
	
	\bibitem{journals/dcc/Boer96}
	Mario~A. de~Boer.
	\newblock Almost mds codes.
	\newblock {\em Des. Codes Cryptography}, 9(2):143--155, 1996.
	
	\bibitem{journals/jgeom/Dodunekov95}
	S.M. Dodunekov and I.N. Landjev.
	\newblock On near-mds codes.
	\newblock {\em Journal of Geometry}, 54:30--43, 1995.
	
	\bibitem{conf/mss/GreenanLW10}
	Kevin~M. Greenan, Xiaozhou Li, and Jay~J. Wylie.
	\newblock Flat xor-based erasure codes in storage systems: Constructions,
	efficient recovery, and tradeoffs.
	\newblock In {\em MSST}, pages 1--14. IEEE Computer Society, 2010.
	
	\bibitem{conf/fast/Hafner05}
	James~Lee Hafner.
	\newblock Weaver codes: Highly fault tolerant erasure codes for storage
	systems.
	\newblock In {\em FAST}. USENIX, 2005.
	
	\bibitem{conf/icc/HeidePFM11}
	Janus Heide, Morten~V. Pedersen, Frank H.~P. Fitzek, and Muriel M{\'e}dard.
	\newblock On code parameters and coding vector representation for practical
	{RLNC}.
	\newblock In {\em ICC}, pages 1--5. IEEE, 2011.
	
	\bibitem{Ho2006}
	T.~Ho, M.~M\'{e}dard, R.~Koetter, D.~R. Karger, M.~Effros, J.~Shi, and
	B.~Leong.
	\newblock A random linear network coding approach to multicast.
	\newblock {\em IEEE Transactions on Information Theory}, 52(10):4413--4430,
	2006.
	
	\bibitem{journals/tos/HuangCL13}
	Cheng Huang, Minghua Chen, and Jin Li.
	\newblock Pyramid codes: Flexible schemes to trade space for access efficiency
	in reliable data storage systems.
	\newblock {\em TOS}, 9(1):3, 2013.
	
	\bibitem{journals/tc/HuangX08}
	Cheng Huang and Lihao Xu.
	\newblock Star : An efficient coding scheme for correcting triple storage node
	failures.
	\newblock {\em IEEE Trans. Computers}, 57(7):889--901, 2008.
	
	\bibitem{kbp:12:rec}
	O.~Khan, R.~Burns, J.~S. Plank, W.~Pierce, and C.~Huang.
	\newblock Rethinking erasure codes for cloud file systems: Minimizing {I/O} for
	recovery and degraded reads.
	\newblock In {\em FAST-2012: 10th Usenix Conference on File and Storage
		Technologies}, San Jose, February 2012.
	
	\bibitem{Akhlaghi}
	A.~Kiani and S.~Akhlaghi.
	\newblock A non-mds erasure code scheme for storage applications.
	\newblock {\em Journal of Communication Engineering}, 2.
	
	\bibitem{conf/ifip6-6/KralevskaGO13}
	K.~Kralevska, D.~Gligoroski, and H.~{\O}verby.
	\newblock Balanced {XOR}-ed coding.
	\newblock In {\em Advances in Communication Networking - 19th {EUNICE}/{IFIP}},
	volume 8115 of {\em LNCS}, pages 161--172. Springer, 2013.
	
	\bibitem{6012235}
	E.~Kurdoglu, N.~Thomos, and P.~Frossard.
	\newblock Scalable video dissemination with prioritized network coding.
	\newblock In {\em Multimedia and Expo (ICME), 2011 IEEE International
		Conference on}, pages 1--6, 2011.
	
	\bibitem{conf/stoc/LubyMSSS97}
	Michael Luby, Michael Mitzenmacher, Mohammad~Amin Shokrollahi, Daniel~A.
	Spielman, and Volker Stemann.
	\newblock Practical loss-resilient codes.
	\newblock In {\em STOC}, pages 150--159. ACM, 1997.
	
	\bibitem{conf/globecom/LucaniMS09}
	Daniel~Enrique Lucani, Muriel M{\'e}dard, and Milica Stojanovic.
	\newblock Random linear network coding for time-division duplexing: Field size
	considerations.
	\newblock In {\em GLOBECOM}, pages 1--6. IEEE, 2009.
	
	\bibitem{bok:MW}
	F.J. MacWilliams and N.J.A. Sloane.
	\newblock {\em The Theory of Error-Correcting Codes}.
	\newblock North-holland Publishing Company, 2nd edition, 1978.
	
	\bibitem{icc09}
	M.~V. Pedersen, J.~Heide, F.H.P. Fitzek, and T.~Larsen.
	\newblock Network coding for mobile devices - systematic binary random rateless
	codes.
	\newblock In {\em Workshop on Cooperative Mobile Networks 2009 - ICC09}. IEEE,
	June 2009.
	
	\bibitem{pgm:13:sfg}
	J.~S. Plank, K.~M. Greenan, and E.~L. Miller.
	\newblock Screaming fast {G}alois {F}ield arithmetic using {I}ntel {SIMD}
	instructions.
	\newblock In {\em FAST-2013: 11th Usenix Conference on File and Storage
		Technologies}, San Jose, February 2013.
	
	\bibitem{reed60polynomial}
	Irving Reed and Golomb Solomon.
	\newblock Polynomial codes over certain finite fields.
	\newblock {\em Journal of the Society of Industrial and Applied Mathematics},
	8(2):300{\textendash}304, 06/1960 1960.
	
	\bibitem{book:russell:2003}
	S.~Russel and P.~Norvig.
	\newblock {\em Artificial Intelligence: A Modern Approach}.
	\newblock Pearson Education Inc., 2003.
	
	\bibitem{oai:hal.inria.fr:inria-00613337}
	Oscar Trullols-Cruces, Jose~Maria Barcelo-Ordinas, and Marco Fiore.
	\newblock Exact decoding probability under random linear network coding.
	\newblock 2011.
	
	\bibitem{pedersen}
	P.~Vingelmann, M.~V. Pedersen, F.~H.~P. Fitzek, and J.~Heide.
	\newblock Multimedia distribution using network coding on the iphone platform.
	\newblock {\em Proceedings of the 2010 ACM multimedia workshop on Mobile cloud
		media computing}, 2010.
	
\end{thebibliography}

\end{document}